\renewcommand{\algorithmicendif}{\textbf{fi}}
\newcommand{\algorithmicdone}{\textbf{done}}
\newcommand{\OLFORALL}[2]
{\STATE \algorithmicfor \mbox{ $\forall$#1 }\algorithmicdo\mbox{ #2 }\
  \algorithmicdone}
\newcommand{\OLIF}[2]
{\STATE \algorithmicif \mbox{ #1 }\algorithmicthen\mbox{ #2 }\algorithmicendif}
\newcommand{\OLELIF}[3]
{\STATE \algorithmicif \mbox{ #1 }\algorithmicthen\mbox{ #2 }\
  \algorithmicelse \mbox{ #3} \algorithmicendif}
\newcommand{\is}{\leftarrow}
\def\ca#1{{\cal#1}}
\let\sem\setminus
\title{\mbox{Detours in Scope-Based Route Planning}%
\thanks{Research supported by the Czech Science Foundation, grant P202/11/0196.}
}
\author{Petr Hlin\v{e}n\'y ~\and~ Ondrej Mori\v{s}}
\institute{Faculty of Informatics, Masaryk University \\
  Botanick\'a 68a, 602 00 Brno, Czech Republic  \\
  \email{hlineny@fi.muni.cz, xmoris@fi.muni.cz}}
\begin{document}

\maketitle

\begin{abstract}
We study a dynamic scenario of the static route planning problem in road 
networks. Particularly, we put accent on the most practical dynamic case
-- increased edge weights (up to infinity). We show how to enhance the 
scope-based route planning approach presented at ESA'11, \cite{HM2011A} 
to intuitively by-pass closures by detours. Three variants of a detour 
``admissibility'' are presented -- from a~simple one with straightforward
implementation through its enhanced version to a full and very complex 
variant variant which always returns an optimal detour. 
\end{abstract}

\section{Introduction}
\label{sec:introduction}
 
Route planning has  many important everyday applications. In fact, it is 
a single pair shortest path (SPSP) problem in real-world road networks. There 
are several specific variations of this problem. Particularly, there are two 
route planning \emph{variants} and two \emph{scenarios}. First, in a 
time-independent variant which is studied in our paper, chosen cost function 
does not depend on time while in a more challenging time-dependent variant it 
does -- i.e., costs of a route depends on departure time. Secondly, a~route 
planning scenario is static if a road network is fixed while in a dynamic 
scenario costs or even an overall is changing predictably (traffic jams, 
turn-angle limits) or unexpectedly (car accidents). 

Complexity of the problem depends varies in different combinations of a~variant 
and a scenario. The very basic static time-independent route planning received
a lot of attention during the last decades and hence we focus on dynamic 
time-independent route planning which is more complicated but also more 
realistic. Furthermore, we believe it has one of the best practical motivations
among all route planning variations. Unfortunately, neither classical graph 
algorithms for SPSP problem such as Dijkstra's \cite{Dijkstra1959}, A* 
\cite{Hart1972} algorithm nor their dynamic adaptions \cite{Cooke1966} are well 
suitable even for time-independent route planning. It is mainly because graphs 
representing real-world road network are very huge. Clearly, a~feasible solution 
lies in splitting algorithms in two phases since route planning problem 
instances are mostly solved on a single road network. First, in a 
\emph{preprocessing} phase, we invest some time to exploit auxiliary data from 
a~road network. Secondly, these auxiliary data are utilized to improve both 
time and space complexity of subsequent instances of the problem -- 
\emph{queries}.

\subsubsection{Related Work.} 
This technique led to several very interesting static approaches in the last 
decade, we refer our reader to surveys \cite{Schultes2008,Delling2009}. 
Unfortunately, most of the recently developed techniques require a rather 
unrealistic assumption -- static road networks. Modifying static approaches 
for dynamic road networks is much harder than one might expect because dynamic
changes invalidate preprocessed data. Nevertheless, there are modifications 
already proven to work in a dynamic scenario such as, for instance, 
highway-hierarchies \cite{Sanders2006,Schultes2007}, ALT 
\cite{Goldberg2005B,Delling2007} or geometric containers 
\cite{Wagner2004,Wagner2005}. Even though this paper does not deal with 
time-dependent route planning, we refer to \cite{Delling2009} for more 
details on this complicated topic.

Recently, in order to fill a gap between a variety of exact route planning
approaches, we have published \cite{HM2011A} a different novel approach
aimed at ``reasonable'' routes.  It is based on a concept of scope, whose
core idea can be informally outlined as follows: The edges of a road network
are associated with a {\em scope} map such that an edge $e$ assigned scope 
$s_e$ is admissible on a route $R$ if, before or after reaching $e$, such 
$R$ travels distance less than a value associated with $s_e$ on edges with 
scope higher than~$s_e$.  The desired effect is that low-level roads are 
fine near the start or target positions, while only roads of the highest 
scope are admissible in the long middle sections of distant routing queries. 
Overall, this nicely corresponds with human thinking of intuitive routes, 
and allows for a very space-efficient preprocessing, too.

\paragraph{New Contribution.}
We present a dynamic adjustement of aforementioned admissibility concept 
along with a simple modification of scope-based Dijkstra's algorithm.
We allow a reasonably small number of road closures in a road network and 
out approach can be straightforwardly generalized also for slowed-down 
roads. We present three definition of admissile detours -- from the simple 
one with efficient implementation with time and space complexity of the 
original scope-based Dijkstra's algorithm, though its enhanced version to
the full and most complex one which always ensures the existence of a 
detour.  We have briefly experimentally evalueted implementation of the 
first and second definitions with very promising results and have also 
incorporated these algorithms into scope-based route planning approach.

\section{Fundamentals}
\label{sec:fundamentals}

\paragraph{Graphs and Walks.} 
A {\em directed graph} $G$ is a pair of a finite set $V(G)$ of vertices 
and a finite multi-set $E(G) \subseteq V(G) \times V(G)$ of edges. A 
\emph{walk} $P \subseteq G$ is an alternating sequence $(u_0,e_1,u_1,\ldots,
e_k,u_k)$ of vertices and edges of $G$ such that $e_i = (u_{i-1},u_i)$ for 
$i = 1,\ldots,k$. To point out the start vertex $u=u_0$ and the end $v=u_k$, 
we say $P$ is a $u$-$v$ walk. A {\em subwalk} $Q\subseteq P$ is $Q=(x=u_i,
e_{i+1},\ldots,u_j=y)$ for some $0\leq i\leq j\leq k$, and it is referred to 
as $Q=P^{xy}$ (for simplicity, possible ambiguity with exact reference to the 
position of $x,y$ in $P$ is neglected). The \emph{weight} of a walk $P 
\subseteq G$ w.r.t. a weighting $w: E(G) \mapsto \mathbb{R}$ of $G$ is 
defined as $|P|_w=w(e_1)+w(e_2)+\dots+w(e_k)$ where $P=(u_0,e_1,\ldots,e_k,u_k)$. 
An {\em optimal walk} between two vertices achieves the minimum weight over 
all walks.

\paragraph{Road Networks.}
A {\em road network} is referred to as a pair $(G,w)$. Naturally, $G$ is 
a~directed graph $G$ such that the junctions are represented by $V(G)$ and 
the roads by $E(G)$). Moreover, $G$ is assigned a {\em non-negative} edge 
weighting $w$ representing chosen cost function (or a suitable combination 
of cost functions). 

\paragraph{Problem Formulation.}
Given a road network $(G,w)$ and start and target vertices $s,t \in V(G)$, 
find a walk from $s$ to $t$ \emph{optimal with respect to given optimality 
criteria}. Regarding a weighting function, there are two \emph{variants} of 
the problem -- in a \emph{time-independent} one, $w : E(G) \mapsto 
\mathbb{R}^+_0$, while in \emph{time-dependent}, $w : E(G) \times \mathbb{N}_0 
\mapsto \mathbb{R}^+_0$ depends on discrete departure time slots (or their
linear interpolation, see \cite{Delling2009}). Moreover, there are two route 
planning \emph{scenarios} -- in a \emph{static} one, both $G$ and $w$ are 
fixed during all queries. On the other hand, $G$ and $w$ may change either 
predictably (e.g. rush hours) or unexpectedly (e.g. car accidents) in a 
\emph{dynamic} scenario; the updated road network is denoted by~$(G^*,w^*)$.

\medskip
In this paper, we study a specific version of the time-independent {\em dynamic}
route planning problem: the underlying graph $G$ remains 
static and $w$ is only allowed to increase. Formally, $G^* = G$ and $w^*: E(G) 
\mapsto \mathbb{R}_0^+ \cup \{\infty\}$, $w^*(e) \ge w(e)$ for all $e \in E(G)$ 
(particularly, $w^*(e) = \infty$ implies that $e$ is ``closed''). 
The choice of this version is driven by typical real-world situations. 
To advocate this simply and 
informally, occasions on which a road is improved or a new one built are much 
less frequent than temporary road closures. We thus for simplicity omit the 
possibility of adding new edges to $G$ in this paper, though we keep in mind 
that locally adding an edge may be necessary, e.g., to designate a detour.

We note again that we do not deal with the full power of time dependent planning
as~\cite{Delling2009}, mainly due to a totally different conceptual view of 
the latter and also due to the fact that it is not well understood yet.
Still, our dynamic scenario can incorporate some aspects of time-dependent
weghting function, namely on-demand reflection of rush hours on
selected edges (i.e., busy roads).

Finally, we assume familiarity with classical Dijkstra's algorithm and its 
bidirectional variants (otherwise see Appendix \ref{app:dijkstra}) for shortest 
paths. 

\subsection{Scope-Based Route Planning in a Nutshell}
\label{sec:scope}

A simplified version of the recently introduced \emph{scope 
concept}~\cite{HM2011A} is very briefly recapitulated here. We strongly 
recommend reading the original paper \cite{HM2011A} for better understanding 
and more detailed treatment. Due to lack of space, many details are omitted 
here. The purpose of introducing scope has been twofold: to capture in a 
mathematically rigorous way a vague meaning of ``comfort and intuitivness'' 
of a route, and at the same time allow for more memory efficient preprocessing
of the static road network data for very fast subsequent queries. It works
best with a cost function correlated with travel time. 

\begin{definition}[Scope \cite{HM2011A}]
\label{def:scope}
Let $(G,w)$ be a road network. \emph{A scope mapping} is defined as ${\cal{S}}: 
E(G) \mapsto \mathbb{N}_0 \cup \{\infty\}$ such that $0,\infty \in 
Im({\cal{S}})$. Elements of the image $Im({\cal{S}})$ are called \emph{scope 
levels}. Each scope level $i\in Im({\cal{S}})$ is assigned a constant value of 
\emph{scope} $\nu^{\cal{S}}_i \in \mathbb{R}_0 \cup \{\infty\}$ such that 
$0 = \nu^{\cal{S}}_0 < \nu^{\cal{S}}_1 < \cdots < \nu^{\cal{S}}_\infty = \infty$.
\end{definition}

In practice there are only a few scope levels in $Im({\cal{S}})$ (say,~5). 
The desired effect, as formalized next, is in \emph{clever} using low-level 
roads only near the start or target positions until higher level roads 
become widely available. For that one has to count how much has been travelled 
along a given walk on edges of higher level (Def.~\ref{def:Sdraw}), 
and do not admit lower-level edges further on (Def.~\ref{def:stadmissible},
iii.).

\begin{definition}[Scope $\ca S$-draw]
\label{def:Sdraw}
Let $(G,w)$ be a road network and a scope mapping $\cal{S}$.
The $\ca S$-draw value of a walk $P\subseteq G$ is a vector 
$draw^{\ca S}(P)=\vec\sigma$ indexed by $Im({\cal{S}})$ 
such that $\vec\sigma_\ell=
	\sum_{f \in E(P),\, {\cal{S}}(f)>\ell}\, w(f)$
for $\ell\in Im({\cal{S}})$.
\end{definition}

For practical applications, the formula for $draw^{\ca S}(P)$ is expanded with 
a so called {\em turn-scope handicap} \cite{HM2011A} penalizing for missed 
higher-level edges.

\begin{definition}[Admissibility \cite{HM2011A}]
  \label{def:stadmissible}
  Let $(G,w)$ be a road network. Consider a walk  $P=(s=u_0,e_1,\dots e_k,u_k=t)
  \subseteq G$ from the start $s$ to the end~$t$.
  \begin{enumerate}[a)]\vspace{-4pt}
  \item  An edge $e = (v_1,v_2) \in E(G)$ is \emph{$x$-admissible} in $G$ for 
    a scope mapping $\cal{S}$ if, and only if, there exists a walk $Q 
    \subseteq G - e$ from $x\in V(G)$ to $v_1$ such that
    \begin{enumerate}[i.]\parskip1pt
    \item each edge of $Q$ is recursively $x$-admissible in $G - e$ for $\cal{S}$,
    \item $Q$ is optimal subject to (1), and
    \item for $\ell={\cal{S}}(e)$ and $\vec\sigma=draw^{\ca S}(Q)$,
      it is $\vec\sigma_\ell\leq \nu^{\cal{S}}_{\ell}$.
    \end{enumerate}
    \smallskip
  \item Whole $P$ is \emph{$s$-admissible} in $G$ if every $e_i \in E(P)$ is 
    $s$-admissible in $G$;
    \smallskip
  \item\label{it:revR}
    and $P$ is \emph{$\ca S$-admissible} 
    (with implicit respect to $s,t$) if there exists $0\leq j\leq k$ such that 
    every $e_m \in E(P)$, $m\leq j$, is $s$-admissible in $G$, and the reverse 
    of every $e_m\in E(P)$, $m>j$, is $t$-admissible in reverse $G^R$.
  \end{enumerate}
\end{definition}

Note the last part \ref{it:revR}) of Definition~\ref{def:stadmissible}---
there and further on we often use a simplificating term ``{\em in reverse}''
to refer to the network $(G^R,w)$ obtained by reversing all edges of $G$,
and to exchanged start and end $t,s$. This is to make our definitions 
symmetric from the viewpoint of $s$ as from~$t$.

\begin{remark}
A vertex $v \in V(G)$ is \emph{$s$-saturated} (for $s \in V(G)$) if 
$[draw^{\ca S}(P)]_{\ell} > \nu^{\cal{S}}_{\ell}$ for $\ell < \infty$ where $P$ 
is an optimal $s$-admissible $s-v$ walk. In other words, a vertex is $s$-saturated 
if any $s$-admissible edge leaving the last vertex of $P$ has the scope level $\infty$.
\end{remark}

\subsubsection{Static $\cal{S}$-Dijkstra's Algorithm of \cite{HM2011A} 
(see Appendix~\ref{app:sdijkstra}).}

Having a definition of admissible walks, one needs also a corresponding 
route planning algorithm. The seemingly complicated Def.~\ref{def:stadmissible} 
can actually be smoothly and simply integrated into traditional Dijkstra's 
or A* algorithms and their bidirectional variants. 

\begin{itemize}
\item For each scanned vertex $v$, a track of the best value $\vec\sigma[v]$ 
  of $\ca S$-draw is kept.
\item An edge $e$ leaving $v$ is relaxed only if $\vec\sigma_{{\cal S}(e)}[v]\leq 
  \nu^{\cal{S}}_{{\cal S}(e)}$ (cf.~Def.~\ref{def:stadmissible}, iii.).
\end{itemize}

\begin{theorem}[\cite{HM2011A}]
  \label{thm:SDijkstra}
  {\em$\cal{S}$-Dijkstra's algorithm} (uni-di\-rectional), for a road 
  network $(G,w)$, a scope mapping ${\cal{S}}$, and a start vertex 
  $s \in V(G)$, computes an optimal $s$-admissible walk from $s$ to 
  every $v \in V(G)$ in time ${\cal O}\big(|E(G)|\cdot|Im({\cal{S}})|+
  |V(G)|\cdot\log |V(G)|\big)$.
\end{theorem}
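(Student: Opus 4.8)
The plan is to mirror the classical correctness and complexity analysis of Dijkstra's algorithm, adding a bookkeeping layer for the $\mathcal S$-draw vectors. Write $d^*(v)$ for the weight of an optimal $s$-admissible walk from $s$ to $v$ (with $d^*(v)=\infty$ if none exists), and let $\vec\sigma^*(v)$ denote the $\mathcal S$-draw of such a walk. I would prove, by induction on the order in which vertices leave the priority queue, the invariant that when a vertex $v$ is scanned the stored pair $(d[v],\vec\sigma[v])$ equals $(d^*(v),\vec\sigma^*(v))$ for a concrete optimal $s$-admissible $s$--$v$ walk, and that $d[v]$ is then final. The base case $v=s$ is immediate with $d[s]=0$ and $\vec\sigma[s]=\vec0$, and once the last vertex is scanned the invariant yields exactly the claimed output.

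For the inductive step I would first record a prefix-optimality (Bellman) principle for admissibility: if $P=(s,e_1,\dots,e_k,u_k)$ is an optimal $s$-admissible walk, then every prefix $P^{s u_j}$ is an optimal $s$-admissible walk to $u_j$. Granting this, when $v$ is extracted let $P$ be an optimal $s$-admissible $s$--$v$ walk with last edge $e=(u,v)$; its predecessor $u$ was scanned earlier with correct $(d[u],\vec\sigma[u])$ by the induction hypothesis. Condition (iii) of Definition~\ref{def:stadmissible} applied to $e$ reads precisely $\vec\sigma_{\mathcal S(e)}[u]\le\nu^{\mathcal S}_{\mathcal S(e)}$, which is exactly the guard under which the algorithm relaxes $e$; hence the relaxation sets $d[v]\le d[u]+w(e)=d^*(v)$ and updates $\vec\sigma[v]$ to the draw of $P$. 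Because all weights are non-negative and $v$ is the current queue minimum, no other tentative label can beat $d^*(v)$, so equality holds and the label is final; conversely the guard ensures the algorithm never relaxes an inadmissible edge, so it never underestimates $d^*$.

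The delicate point, and the one I expect to be the main obstacle, is that Definition~\ref{def:stadmissible} is stated recursively in the \emph{deleted} network $G-e$ with recursively admissible subwalks, whereas the algorithm operates once in $G$ with a single draw vector per vertex. I would resolve this by the standard observation that, for a non-negative weighting, optimal walks may be taken to be simple: the optimal $s$-admissible walk $Q$ witnessing admissibility of $e=(v_1,v_2)$ ends at $v_1$ and therefore never traverses $e$, so $Q\subseteq G-e$ is automatic and admissibility of the edges of $Q$ in $G-e$ coincides with their admissibility in $G$. An induction on walk length then flattens the nested $G-e$ recursion into the single forward pass that the algorithm realizes, and simultaneously delivers the prefix-optimality principle used above. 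A secondary subtlety---that admissibility asks only for the \emph{existence} of some optimal witness $Q$ meeting (iii), while the algorithm tracks one representative---I would dispatch by fixing a consistent tie-break (or invoking uniqueness of optimal walks in general position), so that the stored $\vec\sigma[v]$ is the draw of a genuine optimal $s$-admissible walk.

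Finally, the running time follows from the classical bound with one change. Maintaining $\vec\sigma$ along relaxations costs extra work: relaxing an edge $e$ adds $w(e)$ to every component $\vec\sigma_\ell$ with $\ell<\mathcal S(e)$, i.e.\ $\mathcal O(|Im(\mathcal S)|)$ per relaxation, while the guard check is $\mathcal O(1)$; so all relaxations together cost $\mathcal O(|E(G)|\cdot|Im(\mathcal S)|)$. With a Fibonacci heap the $|V(G)|$ extract-min operations contribute $\mathcal O(|V(G)|\log|V(G)|)$ and the decrease-keys $\mathcal O(|E(G)|)$, the latter being absorbed above. Summing gives $\mathcal O\big(|E(G)|\cdot|Im(\mathcal S)|+|V(G)|\log|V(G)|\big)$, as stated.
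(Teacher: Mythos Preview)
The paper does not supply its own proof of this theorem: it is quoted from~\cite{HM2011A}, and only the algorithm's pseudocode is reproduced in Appendix~\ref{app:sdijkstra}. Hence there is no in-paper argument to compare your proposal against. That said, your outline is the natural one and presumably matches what the cited proof does: lift the classical Dijkstra induction on extraction order, carry the $\ca S$-draw vector alongside the distance label, and let the guard $\vec\sigma_{\ca S(e)}[u]\le\nu^{\ca S}_{\ca S(e)}$ be the only new ingredient in relaxation. Your flattening of the $G-e$ recursion via simplicity of optimal walks under non-negative weights is the right idea, and the complexity accounting is correct.

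One point to tighten. The \textsc{Relax} routine in Appendix~\ref{app:sdijkstra} does \emph{not} store $\vec\sigma[v]$ as the draw of a single representative walk: on ties it takes a component-wise minimum $\sigma_i[v]\leftarrow\min\{\sigma_i[v],\sigma_i[u]+\gamma_i\}$, so the stored vector may not be realized by any one walk. Your proposed fix via tie-breaking therefore does not match the actual algorithm. This does not endanger correctness---admissibility of an edge $e$ needs only \emph{some} optimal witness $Q$ with $[draw^{\ca S}(Q)]_{\ca S(e)}\le\nu^{\ca S}_{\ca S(e)}$, and the component-wise minimum at the single level $\ca S(e)$ is attained by an actual optimal walk---but your invariant should be phrased as ``$\sigma_\ell[v]$ equals the minimum of $[draw^{\ca S}(Q)]_\ell$ over optimal $s$-admissible $s$--$v$ walks $Q$'' rather than as the draw of one concrete walk.
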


An optimal $\ca S$-admissible $s$-$t$ walk in $(G,w)$ is then found
by a natural bidirectional application of Theorem~\ref{thm:SDijkstra},
which also allows for a very efficient preprocessing of the road network,
as detailed in \cite{HM2011A}.

\section{Detours -- on the Price of Admissibility}
\label{sec:detours}

In the dynamic scenario a static $\cal{S}$-Dijkstra's algorithm may badly fail.
Imagine a driver approaching a restricted tunnel (e.g. by a car 
accident) such that it can be bypassed on low-level mountain roads only. 
What would a driver do? 

She could drive through this restrictions according to her original route plan 
and accept increased cost (if possible). However, there might be a better route.
Notice that a re-planning her route from scratch might not be possible due to 
temporarily invalidated preprocessed data. The best intuitive solution for her 
is to slip off the original route (even ahead of the restricted tunnel) and use 
a detour by re-allowing the use of low-level (i.e., inadmissible in the ordinary 
setting) mountain road nearby this restriction. She still wants to minimize 
costs of such detour and drive comfortably within the margins of such adjusted 
scope admissibility view. On the other hand, static Def.~\ref{def:stadmissible}
would not allow the aforementioned detour for natural reasons (unless the closure is
near the start or target position); the static scope mapping simply cannot 
account for such unexpected closures in advance. Yet there is a good solution 
which extends the very nice properties of static scope to the considered 
dynamic scenario.

\subsection{General Strategy for Avoiding Closures}

As mentioned in Section~\ref{sec:fundamentals}, a formal view of this dynamic 
scenario is that the original static road network $(G,w)$ with $\ca S$ is 
replaced by $(G,w^*)$ where $w^*$ {\em increases the weight} of some edges (up
to $\infty$), while $\ca S$ and $G$ stay the same. Let $C=\{e\in E(G): 
w^*(e)>w(e)\}$. For simplicity, we further assume $w^*(e)=\infty$ for $e\in C$
and call $C$ the set of {\em (road) closures}, but a generalization to 
arbitrary weight increase $w^*(e) > w(e)$ is straightforward. Hence, 
from now on, we focus only on $w^*$ and the closure set $C\subseteq E(G)$
which is {\em assumed relatively small}.

The mathematical task is to relax the meaning of scope admissibility close 
to the edges of~$C$. For that we slightly extend the definitions of 
Section~\ref{sec:scope}: We say $\vec\omega$ is an {\em$\ca S$-vector} if 
$\vec\omega$ is indexed by $Im({\cal{S}})$. As in 
Definition~\ref{def:stadmissible}\,b), a walk $P=(s=u_0,e_1,\dots u_k=t)$ 
is called \emph{$(s,\vec\omega)$-admissible} if the condition (iii) 
newly reads $\vec\sigma_\ell+\vec\omega_\ell \leq \nu^{\cal{S}}_{\ell}$. 
Similarly, as in Def.~\ref{def:stadmissible}\,b), this $P$ is {\em$\ca 
S$-admissible when amended} with the initial and/or final $\ca S$-vectors 
$\vec\omega^s$, $\vec\omega^t$, if there exists $0\leq j\leq k$ such that 
every $e_i \in E(P)$, $i\leq j$, is $(s,\vec\omega^s)$-admissible in $G$, 
and the reverse of every $e_i\in E(P)$, $i>j$, is 
$(t,\vec\omega^t)$-admissible in reverse $G^R$. This definition simply 
captures a possibility that some of the $\ca S$-draw value has already 
been used (exhausted) before entering~$P$.

As mentioned at the beginning of this section, we would like to allow
limited use of inadmissible (either for $s$ or $t$) edges near the closures. 
The crucial question is to decide which inadmissible edges should be
additionally allowed. The first step is to specify at which vertices a closure 
affects an $s$-$t$ route.

\begin{definition}[$C$-obstructed vertex]
  \label{def:c-obstructed}
  Let $(G,w)$ be a road network with a scope mapping $\ca S$,  
  $C\subseteq E(G)$ a set of closures and $s,t \in V(G)$.   
  \begin{enumerate}[i.]
  \item A vertex $d\in V(G)$ is {\em$C$-obstructed} for the initial 
    $\ca S$-vector $\vec\omega^s$ and target $t$ if there exists a 
    $d$-$t$ walk $Q\subseteq G$ which is optimal $\ca S$-admissible when 
    amended with initial $\vec\omega^s$, such that $Q$ contains some (any)
    edge $(z,v)\in C$. 
    \smallskip
  \item A \emph{$C$-obstruction state} of $d$ is the $\ca S$-vector
    $\vec\sigma$ such that, for each scope level $\ell \in Im({\cal{S}})$,
    $\vec\sigma_\ell$ is the minimum of $\big[draw^{\ca S}(Q^{dz})
    \big]_\ell$ over all walks $Q$ and $z$ as from (i.),
    where $Q^{dz}$ is the (shortest) $d$-$z$ subwalk of~$Q$.
  \item Moreover, a \emph{$C$-obstruction level} of $d$
    (again for initial $\vec\omega^s$ and target $t$) is the minimum scope 
    level $\ell \in Im({\cal{S}})$ such that $\vec\sigma_\ell\leq\nu^{\cal{S}}_\ell$.
  \end{enumerate}
  If $\vec\omega^s=(\infty,\dots,\infty)$ (the most restrictive case),
  then we shortly say that $d$ is {\em$C$-obstructed} for the target $t$.
  Analogously, $d$ is {\em$C$-obstructed} for the start $s$ and final 
  $\ca S$-vector $\vec\omega^t$ if (i.--\,iii.) holds in reverse. 
\end{definition}


Note that typically only one optimal $d$-$t$ walk $Q$ exists in (i.), 
but for formal correctness of the definition we have to range over all
possible ones in (ii.).

For an informal explanation, $d$ is $C$-obstructed when 
an edge $e \in C$ (a closure) affects an optimal $\cal{S}$-admissible 
walk from $d$ to the target $t$, and $e$ is not ``far away'' from $d$ 
wrt.\ the $\ca S$-draw value on level $\ell$; or this symmetrically 
happens in reverse~$G^R$. The role of $\vec\omega^s,\vec\omega^t$ in 
Def.~\ref{def:c-obstructed} is purely technical (to capture $\ca S$-draw 
value travelled prior to approaching $d$ in certain situations), and one 
may simply ignore it for getting the general informal picture.

\subsection{Simple Detours}
\label{sec:simple-detours}

In this section, we can finally define the simplest version of $C$-detour 
admissibility -- i.e., the relaxation of traditional $\ca S$-admissibility 
in the presence of closures. Informally, an $s$-$t$ walk is simple $C$-detour
$\ca S$-admissible if it avoids all the closed roads in $C$;
and, in addition to ordinary $\ca S$-admissibility
(Def.~\ref{def:stadmissible}),
certain ``detour permits'' are issued near those edges of $C$ which
(potentially) obstruct an optimal $s$-$t$ walk.
A simple detour permit just allows limited local use of edges of low
scope-level (which would not be permitted otherwise by
Def.~\ref{def:stadmissible}), until non-closed higher level edges become
available again. These permits (and subsequent detours) can be repeated
along an admissible walk, as needed by further closures.
The formal definition is as follows.

\begin{definition}[Simple $C$-detour $\ca S$-admissibility]
  \label{def:simple-c-detour}
  Let $(G,w)$ be a road network, $\ca S$ a scope mapping on it,
  and $C\subseteq E(G)$ a set of road closures. An $s$-$t$ walk 
  $P=(s=u_0,e_1,\dots e_k,u_k=t)\subseteq G$ is {\em simple $C$-detour 
    $\ca S$-admissible} if $E(P)\cap C=\emptyset$
  and there exists $0\leq j\leq k$ such that, for 
  each $e_m\in E(P)$, (at~least) one of the following holds:
  \begin{enumerate}[i.]
    \parskip 2pt
  \item $m\leq j$ and $e_m$ is $s$-admissible in $G$ for $\ca S$.
  \item $m>j$ and $e_m$ is $t$-admissible in reverse $G^R$ for $\ca S$.
  \item $\ca S(e_m)=\ell<\infty$ and there exists $0\leq i< m$ such that;
    \begin{itemize}
    \item
      the vertex $d=u_i$ of $P$ is $C$-obstructed for the target~$t$ --
      optionally also for initial $draw^{\ca S}(Q)$
      where $Q$ is an optimal $s$-admissible $s$-$d$ walk%
      \footnote{This part with $Q$ actually applies only when $d$ is not 
        $s$-saturated, i.e., nearby $s$.},
      and
    \item
      the level of obstruction of $d$ is $\ell < \infty$
      and no vertex among $u_{i+1},\dots,u_{m-1}$ is 
      left by an edge $f\not\in C$ in $G$ of $\ca S(f)>\ell$.
    \end{itemize}
  \item Or, stating briefly, (iii) holds in reverse for some $m\leq i<t$.
  \end{enumerate}
\end{definition}
In this definition, points (i.), (ii.) refer to ordinary $\ca S$-admissibility. 
Point (iii.) then permits use of lower-level edges since an obstructed vertex 
$d$ till higher-level becomes available. Actually, there is a minor issue of 
aforementioned Def.~\ref{def:simple-c-detour} left for discussion (and 
resolution) to Section~\ref{sec:enhanced-c-detour}.


\subsubsection{Simple $C$-Detour $\cal{S}$-Dijkstra's Algorithm.} 
Even though Def.~\ref{def:simple-c-detour} might seem complicated, it can be 
implemented straightforwardly by running just a single bidirectional 
$\ca S$-Dijkstra's algorithm. Due to lack of space we present the idea
of our algorithm and omit implementation details in this paper.
Let $(G,w)$ be an original road network with a scope mapping $\cal{S}$, $s,t 
\in V(G)$ start and target vertices, $w^*$ a changed weighing and $C$ a set of 
closures. 

\begin{enumerate}
\parskip 3pt
\item \emph{Static initialization}
  \smallskip

  Static $\cal{S}$-Dijkstra's algorithm is executed in $(G,w)$ bidirectionally
  to find optimal $\ca S$-admissible $s-t$ walk $P$. If $w(P) = w^*(P)$ then our
  algorithm terminates since no detour is needed. Otherwise an optimal simple 
  $C$-detour $\ca S$-admissible $s-t$ walk $Q$ is to be found. If $w^*(Q) < 
  w^*(P)$ then $Q$ is returned, otherwise $P$.
  
\item \emph{Identifying $C$-obstructed vertices}
  \smallskip

  In order to find $Q$, we must identify $C$-obstructed for both $s$ and $t$ and 
  (possibly) their corresponding $\vec\omega_t,\vec\omega_s$ (for obstructions
  close the the start or target). This is done again by running static 
  $\ca S$-Dijkstra's algorithm executed bidirectionally from $s$ and $t$, but 
  now in updated $(G,w^*)$ and $(G,w^*)^R$, respectively. Moreover, 
  the algorithm does not terminate until both search queues are empty. Consider
  the forward search, the reverse is analogous. 
  
  When $\ca S$-Dijkstra's algorithm scans a vertex $u$ such that an edge 
  $(\pi[u],u)$ from its predecessor $\pi[u]$ is a closure, $u$ is 
  $C$-obstructed Def.~\ref{def:c-obstructed}. All successors of $u$ are 
  $C$-obstructed as well. For such vertices $v$ a reference to the end vertex 
  of the nearest closure on their $s$-admissible $s-v$ walk is stored. Using 
  this reference we can easily determine an obstruction state in constant time.
  Using this process we identify all $C$-obstructed vertices for $s$ with
  final $\vec\omega_t=(\infty, ..., \infty)$. To get the other 
  $C$-obstructed vertices for $s$ we must combine forward and reverse searches 
  as follows. When there is a vertex $w$ scanned in both direction such that 
  there is a closure $c=(x,y)$ on $w-t$ $t$-admissible walk then all 
  non-$t$-saturated vertices on $y-t$ $t$-admissible walk are $C$-obstructed
  $s$ with final final $\vec\omega_t$ given by their $\ca S$-draw values in
  the reverse search.

\item \emph{Permitting not $\ca S$-admissibile edges}
  \smallskip

  Now we have identified $C$-obstructed vertices for both $s$ and $t$ together
  with their $C$-obstruction states and for those close enough to $s$ or $t$ we also
  know initial and final vectors $\vec\omega^t$ and $\vec\omega^s$. From obstruction
  states we can easily determine obstruction levels of these vertices, $\ca S$-draw 
  values, predecessors and distance estimates from $s$ (or to $t$, respectively). 
  Again, consider the forward direction. If a $C$-obstructed vertex $v$ for $t$ has 
  obstruction level $\ell$ smaller than $\infty$, gets \emph{a permit} for relaxing
  any of its outgoing edges $e=(v,w)$ of level ${\ca S}(e) = \ell$ without updating 
  $\ca S$-draw value of $w$ and $v$ is pushed to the search queue with updated 
  distance estimate and parent. The same holds for all edges of $w$ and next 
  successors. Once there is an edge of scope higher than $\ell$ outgoing $w$, 
  permitting stops. This is done for all $C$-obstructed vertices for both $s$ and 
  $t$. 

\item \emph{Completing simple $C$-detour $\ca S$-admissible $s-t$ walk}
  \smallskip

  Finally, we keep static $\ca S$-Dijkstra's algorithm running with all 
  auxiliary data structures computed so far in $(G,w^*)$, the resulting $s-t$ 
  walk is $Q$. 

\end{enumerate}

We would like to emphasize that all aforementioned steps can be done in 
a~single bidirectional execution of $\ca S$-Dijsktra's algorithm. Hence 
running time of aforementioned algorithm remains in 
${\cal O}\big(|E(G)|\cdot|Im({\cal{S}})|+|V(G)|\cdot\log |V(G)|\big)$.

\subsection{Enhanced Detours}
\label{sec:enhanced-c-detour}

Unfortunately, aforementioned Definition~\ref{def:simple-c-detour} has a minor 
practical drawback which can cause that in some very specific cases there is no 
simple $C$-detour $\ca S$-admissible $s$-$t$ walk even though a $C$-avoiding
$s$-$t$ walk exists:
\begin{itemize}
\item
Imagine a one-directional road segment $f'=(d,d')$ just preceding $f\in C$
(more generally, a local map area which can be left only through $f$).
Then one cannot take a detour from $d'$, and a lower-scope detour edge from
$d$ may not be allowed by Def.~\ref{def:c-obstructed}.iii.
\item
Hence in such situation this $f'$ is {\em effectively closed} as well,
and we can capture this in a supplementary definition which
consequently resolves the issue:
\end{itemize}

\begin{definition}[Quasi-Closure]
Let $(G,w)$ be a road network, $C \subseteq E(G)$ a set of closures and
$t \in V(G)$ the target vertex. An edge $(u,v) \in E(G)\sem C$ is 
\emph{$C$-quasi-closed for $t$} 
if there is no $\ca S$-admissible $u$-$t$ walk starting with $(u,v)$
in the subnetwork $(G-C,w)$.

A $C$-quasi-closed edge for the start $s$ is defined analogously in reverse.
\end{definition}

For a set $C$ of closures, we denote by $C^*$ its {\em qc-closure},
i.e., the least fixed point of the operation of adding $C$-quasi-closed
edges for $t$ or $s$ to~$C$. We then
easily amend the definition of simple detour admissibility as follows.

\begin{definition}[Enhanced $C$-detour $\ca S$-admissibility]
\label{def:enhanced-detour}
  Let $(G,w)$ be a road network, $\ca S$ a scope mapping on it,
  and $C\subseteq E(G)$ a set of road closures. An $s$-$t$ walk 
  $P$ is {\em enhanced $C$-detour $\ca S$-admissible} if $P$ is simply
  $C^*$-detour $\ca S$-admissible, where $C^*$ is the qc-closure of $C$.
\end{definition}

With that, and using also the definition of a {\em proper} scope mapping
from \cite{HM2011A}, we can now state technical
Proposition~\ref{pro:enhanced-detour-exists}.

In a standard connectivity setting, a graph (road network) $G$ is 
{\em routing-connected\/} if, for every pair of edges $e,f\in E(G)$, there 
exists a walk in $G$ starting with $e$ and ending with $f$.
  A scope mapping $\cal{S}$ of a routing-connected graph $G$ is {\em proper} 
  if, for all $i\in Im({\cal{S}})$, the subgraph $G^{[i]}$ induced by those 
  edges $e\in E(G)$ such that ${\cal S}(e)\geq i$ is routing-connected. 

\begin{proposition}
  \label{pro:enhanced-detour-exists}
  Let $(G,w)$ be a road network, $\ca S$ a proper scope mapping on it,
  and $C\subseteq E(G)$ a set of road closures. Assume $s,t\in V(G)$.
  If there exists a $C$-avoiding $s$-$t$ walk in $(G,w)$, i.e., one not
  containing any edge of $C$, then there also exists an enhanced $C$-detour
  $\ca S$-admissible $s$-$t$ walk there.
\end{proposition}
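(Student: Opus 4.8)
The plan is to build the desired walk by working outward from $s$ and inward to $t$ simultaneously, gluing the two halves inside the top-scope subnetwork, and to drive the whole construction by the defining fixed-point property of the qc-closure $C^*$. First I would record this property explicitly (call it (FP)): because $C^*$ is the \emph{least} fixed point of the quasi-closure operation, no edge outside $C^*$ is $C^*$-quasi-closed, so for every edge $(u,v)\in E(G)\sem C^*$ there is an $\ca S$-admissible $u$-$t$ walk in $(G-C^*,w)$ starting with $(u,v)$, and symmetrically an $\ca S$-admissible walk from $s$ in $(G-C^*,w)$ whose last edge is $(u,v)$. Intuitively, $C^*$ has deleted exactly the genuine dead ends, so every surviving edge still admits a \emph{fresh-budget} admissible continuation to the relevant terminal; this is the engine of the whole argument.

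Next I would show that this pruning leaves enough of the network intact. Since $\ca S$ is \emph{proper}, each level subgraph $G^{[i]}$---in particular the top-scope subnetwork $G^{[\infty]}$---is routing-connected, and top-scope edges are $\ca S$-admissible from any starting budget because clause (iii) of Def.~\ref{def:stadmissible} is vacuous at level $\infty$. These edges thus provide a global ``highway'' joining the two halves of the walk. Combining this with the assumed $C$-avoiding $s$-$t$ walk, I would argue that $s$ and $t$ remain joined in $G-C^*$: every edge that the quasi-closure removes is one from which $t$ (resp.\ $s$) is \emph{admissibly} unreachable even with a fresh budget, and properness prevents such removals from severing the surviving top-scope highway that the $C$-avoiding walk must eventually meet.

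The core of the proof is then to upgrade a $C^*$-avoiding walk into one that is simple $C^*$-detour $\ca S$-admissible in the sense of Def.~\ref{def:simple-c-detour} (with $C$ replaced by $C^*$, as prescribed by Def.~\ref{def:enhanced-detour}). I would construct the $s$-governed prefix greedily: follow the optimal $s$-admissible walk until it either reaches the top-scope highway, where it can be continued freely toward the meeting point, or is forced onto a lower-scope edge because every higher-scope edge leaving the current vertex lies in $C^*$. In the latter situation I would verify that the current vertex $d$ is exactly $C^*$-obstructed for $t$ in the sense of Def.~\ref{def:c-obstructed}---using (FP) to exhibit the obstructing closure on the optimal admissible continuation---and that its obstruction level equals the scope level $\ell$ of the detour edge we are forced to take, so that clause (iii) of Def.~\ref{def:simple-c-detour} issues the required permit; the permit stays valid until a non-$C^*$ edge of scope $>\ell$ reappears, which routing-connectivity of the next level guarantees. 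The $t$-governed suffix is built symmetrically in reverse, and the $\vec\omega$-amended variants of Def.~\ref{def:c-obstructed} absorb the cases where a closure sits near $s$ or $t$.

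Finally I would argue termination and gluing: each detour strictly raises the scope level subsequently available, or strictly advances along a finite top-scope highway, so the prefix reaches $G^{[\infty]}$ after finitely many detours and, by routing-connectedness of $G^{[\infty]}-C^*$, meets the symmetrically constructed suffix, yielding the split index $j$. I expect the main obstacle to be the precise bookkeeping of clause (iii): one must show that whenever the greedy prefix is \emph{forced} down to scope $\ell$, the vertex reached genuinely qualifies as $C^*$-obstructed \emph{with obstruction level exactly $\ell$}, matching the scope of the only available detour edges, and that no intervening vertex offers a higher non-closed edge that would invalidate the permit. Establishing that this matching never fails---equivalently, that the qc-closure has removed precisely the edges whose absence would otherwise break the permit chain---is the delicate heart of the argument, and it is exactly here that both the fixed-point property (FP) of $C^*$ and the level-wise routing-connectivity supplied by properness are indispensable.
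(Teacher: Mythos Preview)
Your plan is considerably more elaborate than the paper's, and in two places it relies on claims that do not hold as stated.

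First, the paper's proof is a one-sided iterative construction rather than a bidirectional gluing. It simply takes an optimal $\ca S$-admissible $s$-$t$ walk $P$ (which exists by Theorem~\ref{thm:connscope}), locates the \emph{first} edge $f=(u,v)\in C^*$ on $P$, and observes that the edge $f_1$ immediately preceding $f$ on $P$ is not in $C^*$; by the fixed-point property you call (FP) this yields an alternative edge $f'=(u,v')\notin C^*$ out of $u$. The key point is that $u$ is $C^*$-obstructed with obstruction level~$0$, because the obstructing closure $f$ is the very next edge, so the obstruction state $\vec\sigma$ vanishes. One then restarts a fresh optimal $\ca S$-admissible walk from $f'$ (chosen never to decrease scope at its start, again via proper scope) and iterates. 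There is no need to reach $G^{[\infty]}$, no bidirectional gluing, and no matching of obstruction level against the scope of the detour edge.

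Second, two concrete gaps in your plan. Your termination and gluing argument invokes ``routing-connectedness of $G^{[\infty]}-C^*$'', but nothing guarantees this: closures may include top-scope edges, top-scope edges are never quasi-closed (they are $\ca S$-admissible from any budget), and the mere existence of a $C$-avoiding $s$-$t$ walk does not force the top-scope highway to survive intact. So the two halves need not meet inside $G^{[\infty]}$. Separately, your claim that at a vertex where all scope-$>\!\ell$ outgoing edges lie in $C^*$ the obstruction level \emph{equals} $\ell$ is false: the optimal admissible continuation then takes one of those closed edges as its first step, making the obstruction state $\vec\sigma=\vec 0$ and hence the obstruction level~$0$, not $\ell$. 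This is exactly what the paper exploits, and it dissolves the ``delicate matching'' you anticipate; in the paper's argument the permit at obstruction level~$0$ is maximally permissive and no level-matching is needed.
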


\subsubsection{Enhanced $C$-Detour $\cal{S}$-Dijkstra's Algorithm.} 

The only difference between this algorithm and the simple one is that we 
have to compute the qc-closure set of $C$. This must be done between step 1 
and step 2 of the simple $C$-detour $\ca S$-Dijkstra's algorithm as follows:

\begin{enumerate}
\item[1.1] \emph{Construction of qc-closure of $C$}
\smallskip

In order to get qc-closure of $C$ we have to run a set of bidirectional 
$\ca S$-Dijkstra algorithms from $s$ to $t$ with a minor modification -- 
we are relaxing all $\cal{S}$-admissible edges (i.e., not only those 
improving distance estimates). In the next run, all edges from $C$ are 
removed in $G$ and the same algorithm is executed again. Edges which 
cannot be reached from $t$ by $\ca S$-admissible walk (in reversed road
network) are quasi-closures. They will be removed from the road network
in the next run. This process continues until all qc-closure set $C^*$
is found. 
\end{enumerate}

By a clever implementation, one can even identify qc-closure set
in a single run of bidirectional $\ca S$-Dijkstra's algorithm. Due to 
lack of space we omit further details in this paper and claim that 
even the implementation of the naive process above requires only a small
number of iterations in practice.

\subsection{Full Detour Admissibility}

In addition to the simple approach of Definition~\ref{def:simple-c-detour}
(and its enhanced version),
we briefly outline a deeper approach which better fits into the overall
idea of scope and comfortable routes, but is technically complicated
and not suited for introductory explanation.
That is why give here an informal outline, while the bare formal definition
is left for the appendix.

\begin{itemize}\def\labelitemi{$\bullet$}
\item
In static $\ca S$-admissibility, we informally
count the $\ca S$-draw value of the travelled subwalk,
and use this information to decide admissibility of edges of restricted
scope.
The same is naturally extended to obstructed vertices and their detours:
Each time a $C$-obstructed vertex $d$ is reached, this lowers the current $\ca
S$-draw value (a better variant of a detour permit)
to one depending on how far $d$ is from the actual closure
(again measured in terms of $\ca S$-draw).

\item
This lowered $\ca S$-draw value then allows exceptional use of low-level
edges for a limited extent since $d$ (in the exactly same way as
low-level edges are allowed near the start $s$).
The same, of course, happens in reverse.

\item
A complication comes from a fact that the whole concept has to be considered
recursively.
This is because lowered $\ca S$-draw value affects $C$-obstructed vertices
further on (there are more of them then, cf.~also secondary detours)
and adds more possibilities of ``detours on detours''.
\end{itemize}

Altogether, the outlined ideas lead to a smooth, though complicated, definition
which nicely incorporates into $\ca S$-Dijkstra's algorithm.
There is also a possibility of a simplified version considering detour
permits only on primary closures---the advantage being in a simple
implementation, virtually almost the same as in
Section~\ref{sec:simple-detours}.
%

\subsection{Experimental Work}

We have implemented a very simple prototype in order to prove good practical
performance of both simple and enhanced $C$-detour $\ca S$-Dijkstra's 
algorithms. Algorithms were implemented in C and compiled using gcc-4.5.1 
without any optimization flags running in a single thread on a machine 
with Intel Core i3 CPU 2.40 GHz with 4 GB RAM. We have used two road 
networks constructed from publicly available TIGER/Line 2010 US roads 
data. Both road network had 10 000 edges and were assigned a scope mapping 
simply according to (corrected) road categories and then it was artificially
balanced to be proper. First road network contains a very small city and its 
rural area, the second contains a~bigger urban area.

We did a set of 500 queries for pairs of randomly distant vertices and we 
placed a few (50) closures randomly on unbounded edges and edge nearby middle 
of the optimal $\ca S$-admissible walk between a pair so that at least one closure
hits the walk. First, we were looking how many quasi-closures will be needed 
in enhanced detour algorithm. The number was very low\footnote{Originally, our 
testing instances contained a lot of degree 2 vertices which was causing a lot of
``fake'' quasi-closures, this problem was solved by merging long chains of such 
edges into a single edge.} -- in average there were only 19 quasi-closures and the 
maximum number of iterations needed to find a qc-closure set was 3. 
Interestingly, there was no significant difference between rural and urban 
testing instance. In average, detour algorithm increased the number of scanned 
vertices just by approx. 8\% and we claim that this number can improved a lot by a 
better implementation. Of course, in case of enhanced algorithm there are more
obstructed vertices given by quasi-closures, but the difference between running
time of both algorithms is neglectable -- 9.2ms and 11.2ms in average. 

\subsection{A Note on Detours in Multi-Staging Scope-Based Approach}

The most important computational aspect of scope lies in the fact that only
the edges of {\em unbounded} scope level $\infty$ matter for global preprocessing
(an idea related to better known {\em reach} \cite{Gutman2004}). Informally, the 
query algorithm of \cite{HM2011A} works in stages: In the \emph{opening cellular
phase}, the road network is locally searched (uni-directional $\cal{S}$-Dijkstra) 
from both start and target vertices until only edges of unbounded scope are 
admissible. Then a small preprocessed ``boundary graph'' is searched by another 
algorithm (e.g.\ hub-based labeling \cite{Abraham2011}) in the \emph{boundary 
phase}. Finally, in the \emph{closing cellular phase}, the scope-unbounded long 
middle section of the route is ``unrolled'' in the whole network.

We remark that the boundary graph will remain static even in the dynamic 
scenario (due to expensive preprocessing), and dynamic changes are mainly 
dealt with in the closing cellular phase. We first remark on the ``only 
negative change'' assumption of our approach (Sec.~\ref{sec:fundamentals}).
This well corresponds with a real-world situation in which just ``bad things
happen on the road'', and the driver thus usually has to find an available
detour, instead of looking for unlikely road improvements. Therefore, we are 
content if our query algorithm finds that an optimal route of the original 
network (wrt.~$w$) is admissible, though not perfectly optimal,\footnote{Note
that a designated detour of a road construction may perhaps turn out faster 
than another previously optimal route.} in the changed network (with $w^*$).
However, when things go worse with $w^*$, then our algorithm works 
efficiently.

\section{Conclusion}

We have outlined the current state of our work on dynamization of the
scope-base route planning technique \cite{HM2011A} for both unexpected and
predictable (to some extend) negative road network changes (closures).  
Our approach is aimed at a proper relaxation of scope admissibility when a 
driver approaches changed road segment, by locally re-allowing nearby roads 
of lower scope level.  At the same time we claim that the computed detour 
minimizes costs and still remains reasonable in terms of scope admissibility.

In a summary, we have shown that a scope-based route planning approach with
cellular preprocessing \cite{HM2011A} can be used not only in static but also 
in dynamic road networks and briefly demonstrated that proof of concept works
well in practice. Our immediate future work in this direction will include 
the following points:
\begin{itemize}
\item an efficient implementation of full $C$-detour algorithm,

\parskip 3pt
\item incorporation of route restrictions and possibly other aspects, and
\item more extensive experimental evaluation of the final implementation.
\end{itemize}

\small 
\bibliographystyle{abbrv}

\newpage
\appendix

\section{Classical Dijkstra's Algorithm}
\label{app:dijkstra}

Classical Dijkstra's algorithm solves the single source shortest paths
problem\footnote{Given a graph and a start vertex find the shortest paths 
from it to the other vertices.} in a graph $G$ with a non-negative 
weighting $w$. Let $s\in V(G)$ be the start vertex (and, optionally, let 
$t \in V(G)$ be the target vertex).

\begin{itemize}
\item The algorithm maintains, for all $v \in V(G)$, a 
  {\em (temporary) distance estimate} of the shortest path from $s$ to $v$ 
  found so far in $d[v]$, and a predecessor of $v$ on that path in $\pi[v]$. 
\item The scanned vertices, i.e. those with $d[v] = \delta_w(s,v)$ confirmed, 
  are stored in the set $T$; and the discovered but not yet scanned vertices, 
  i.e. those with $\infty >d[v] \geq \delta_w(s,v)$, are stored in the set $Q$. 
\item The algorithm work as follows: it iteratively picks a vertex $u \in Q$ 
  with minimum value $d[u]$ and relaxes all the edges $(u,v)$ leaving $u$.
  Then $u$ is removed from $Q$ and added to $T$. {\em Relaxing} an edge $(u,v)$
  means to check if a shortest path estimate from $s$ to $v$ may be improved 
  via $u$; if so, then $d[v]$ and $\pi[v]$ are updated. Finally, $v$ is added 
  into $Q$ if is not there already. 
\item The algorithm terminates when $Q$ is empty (or if $t$ is scanned).
\end{itemize}

Time complexity depends on the implementation of $Q$; such as it is 
${\cal{O}}(|E(G)| + |V(G)|\log|V(G)|)$ with the Fibonacci heap.

\medskip 

Dijkstra's algorithm can be used ``bidirectionally'' to solve SPSP 
problem. Informally, one (forward) algorithm is executed from the start 
vertex in the original graph and another (reverse) algorithm is executed from 
the target in the reversed graph. Forward and reverse algorithms can 
alternate in any way and algorithm terminates, for instance, when there is 
a vertex scanned in both directions.

\begin{algorithm}[H]
\caption{~Unidirectional Dijkstra's Algorithm}
\label{alg:dijkstra}
\begin{algorithmic}[1]    
\smallskip
\REQUIRE A road network $(G,w)$ and a start vertex $s \in V(G)$.
\smallskip
\ENSURE For every $v \in V(G)$, an optimal $s-v$ walk in $G$ (or $\infty$).
\end{algorithmic}
\smallskip
\begin{algorithmic}[1]    
 \OLFORALL{$v \in V(G)$}{$d[v] \leftarrow \infty$;~$\pi[v] \leftarrow \bot$;}
 \COMMENT{\hfill// Initialization}
 \STATE $d[s] \leftarrow 0$;~ $Q \leftarrow \{s\}$;~ $T \leftarrow \emptyset$
 \medskip
 \WHILE[\hfill// Main loop]
 {$Q \neq \emptyset \lor t \notin T$}
  \STATE $u \leftarrow  \min_{d[]}(Q)$;~ $Q \leftarrow Q \setminus\{u\}$
  \FORALL[\hfill // Relaxation of $(u,v)$]{$(u,v)\in E(G)$}
   \smallskip
   \OLIF{$d[v] \geq d[u] + w(u,v)$}
   {$d[v] \leftarrow d[u] + w(u,v)$;~ $\pi[v] \leftarrow u$}
   \smallskip
  \ENDFOR
  \STATE $T \leftarrow T \cup \{u\}$
  \COMMENT{\hfill // Vertex u is now scanned}
 \ENDWHILE
 \STATE \textsc{ConstructWalk}$\,(G,d,\pi)$
 \COMMENT{\hfill// Postprocessing -- generating output.}
\end{algorithmic}
\end{algorithm}

\newpage 

\section{$\cal{S}$-Dijkstra's Algorithm}
\label{app:sdijkstra}

Full pseudocode of $\ca S$-Dijkstra's algorithm as presented in \cite{HM2011A}
follows. For better understanding a concept of $\ca S$-reach and its effect is
removed from the Algorithm~\ref{alg:sdijkstra}.

\begin{algorithm}[H]
\caption{~Unidirectional $\cal{S}$-Dijkstra's Algorithm}
\label{alg:sdijkstra}
\begin{algorithmic}[1]    
\smallskip
\REQUIRE A road network $(G,w)$, a scope ${\cal{S}}$ and a start vertex $s \in V(G)$.
\smallskip
\ENSURE For every $v \in V(G)$, an optimal $s$-admissible walk from $s$ to $v$ 
  in $G$ (or $\infty$).
\end{algorithmic}
\smallskip
\underline{\textsc{Relax}$(u,v,\gamma)$} \vskip 2pt
\begin{algorithmic}[1]    
  \IF[\hfill// Temporary distance estimate updated.]{$d[u] + w(u,v) < d[v]$}
    \STATE $Q \is Q \cup \{v\}$
    \STATE $d[v] \is d[u] + w(u,v)$;~$\pi[v] \is u$
  \ENDIF
  \IF[\hfill// Scope admissibility vector updated.]{$d[u] + w(u,v) \le d[v]$}
    \FORALL{$i \in Im({\cal{S}})$}
      \STATE $\sigma_i[v] \leftarrow \min\{\sigma_i[v],\>\sigma_i[u]+\gamma_i\}$ 
    \ENDFOR
  \ENDIF
  \RETURN
\end{algorithmic}

\medskip
\underline{\textsc{${\cal{S}}$-Dijkstra}$(G,w,{\cal{S}},s)$} 
\vskip 3pt
\begin{algorithmic}[1]
 \FORALL[\hfill// Initialization.]{$v \in V(G)$}
  \STATE $d[v] \is \infty$;~$\pi[v] \is \bot$;~
  \COMMENT{\hfill// Distance estimate and predecessor.}
  \STATE $\sigma[v] \leftarrow (\infty,\ldots,\infty)$
  \COMMENT{\hfill// Scope admissibility vector.}
 \ENDFOR
 \STATE $d[s] \is 0$;~ $Q \is \{s\}$;~$\sigma[s] \is (0,\ldots,0)$
 \WHILE[\hfill// Main loop processing all vertices.]{$Q \neq \emptyset$}
  \STATE $u \is  \min_{d[]}(Q)$;~ $Q \is Q \setminus\{u\}$
  \COMMENT{\hfill// Pick a vertex $u$ with the minimum $d[u]$.}
  \FORALL[\hfill// All edges from $u$; subject to]{$f=(u,v)\in E(G)$}
     \IF[\hfill// $s$-admissibility check.]{$\sigma_{{\cal S}(f)}[u]\leq 
       \nu^{\cal{S}}_{{\cal S}(f)}$}
      \FORALL[\hfill// Adjustment to scope admissibility.]{$i\in Im({\cal{S}})$}
        \OLELIF{${\cal{S}}(f)>i$}{$\gamma_i\is w(f)$}{$\gamma_i \is 0$}
      \ENDFOR
      \STATE \textsc{Relax}$(u,v,\gamma)$
      \COMMENT{\hfill// Relaxation of $f=(u,v)$.}
     \ENDIF
  \ENDFOR
 \ENDWHILE
 \STATE \textsc{ConstructWalk}$\,(G,d,\pi)$
 \COMMENT{\hfill// Postprocessing -- generating output.}
\end{algorithmic}
\end{algorithm}

Bidirectional version of $\ca S$-Dijkstra's algorithm is analogous to the 
bidirectional version of classical algorithm -- two searches are execetued, 
one from the start and another from the target in the reverse road network. The
algorithm terminates when there is a vertex scanned in both directions.

\newpage 

\section{Enhanced $C$-Detour $\cal{S}$-admissibility}

In a standard connectivity setting, a graph (road network) $G$ is 
{\em routing-connected\/} if, for every pair of edges $e,f\in E(G)$, there 
exists a walk in $G$ starting with $e$ and ending with $f$. This obviously 
important property can naturally be extended to our scope concept as follows.

\begin{definition}[Proper Scope]
  \label{def:properscope}
  A scope mapping $\cal{S}$ of a routing-connected graph $G$ is {\em proper} 
  if, for all $i\in Im({\cal{S}})$, the subgraph $G^{[i]}$ induced by those 
  edges $e\in E(G)$ such that ${\cal S}(e)\geq i$ is routing-connected, too. 
\end{definition}

\begin{theorem}[\cite{HM2011A}]
  \label{thm:connscope}
  Let $(G,w)$ be a routing-connected road network and let $\cal{S}$ be a~proper
  scope mapping of it. Then, for every two edges $e=(s,x),f=(y,t) \in E(G)$, 
  there exists an $\ca S$-admissible $s$-$t$ walk $P\subseteq G$ such 
  that $P$ starts with the edge $e$ and ends with~$f$
	(i.e., an $e$-$f$ walk).
\end{theorem}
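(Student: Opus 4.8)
The plan is to reduce the statement to a single ``climbing'' claim and then exploit the unconditional admissibility of scope-$\infty$ edges. First I would record the two trivial endpoints: the edge $e=(s,x)$ is $s$-admissible because the only admissible walk preceding it is the empty $s$-$s$ walk, so its $\ca S$-draw is the zero vector and condition (iii.) of Definition~\ref{def:stadmissible} reads $0\le\nu^{\ca S}_{\ca S(e)}$; symmetrically, the reverse of $f=(y,t)$ is $t$-admissible in $G^R$. At the other extreme, any scope-$\infty$ edge $g$ is admissible no matter how far one has already travelled, since $[draw^{\ca S}(Q)]_\infty=0\le\nu^{\ca S}_\infty=\infty$ for every $Q$. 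Because $\cal S$ is proper, $G^{[\infty]}$ is routing-connected, so once we reach two scope-$\infty$ edges---one approached from the $s$-side, one from the $t$-side---we can join them by a walk inside $G^{[\infty]}$ that is admissible from either end. Choosing the split index $j$ of Definition~\ref{def:stadmissible}(\ref{it:revR}) anywhere in this scope-$\infty$ middle section then yields the desired $\ca S$-admissible $e$-$f$ walk. Thus everything reduces to the claim that \emph{from the admissible edge $e$ one can extend to an $s$-admissible walk eventually using a scope-$\infty$ edge}, together with its reverse analogue.

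For this climbing claim I would induct on the scope levels $0=i_0<i_1<\cdots<i_k=\infty$ of $Im(\ca S)$, using properness of $G^{[i_p]}$ at each step. Two monotonicity facts drive the induction: $[draw^{\ca S}(Q)]_\ell$ is non-increasing in $\ell$ (raising $\ell$ only discards edges from the defining sum), and a step along a scope-$i_p$ edge adds nothing to the level-$i_p$ draw (its scope is not strictly above $i_p$). Hence, having reached an admissible scope-$i_p$ edge ending at a vertex $v$ with level-$i_p$ draw at most $\nu^{\ca S}_{i_p}$, travelling further along scope-$i_p$ edges keeps that draw within budget, so those edges remain admissible. Now properness gives a walk inside the routing-connected $G^{[i_p]}$ from the current edge to some edge of scope strictly above $i_p$; following it until the first such edge appears, every preceding edge has scope exactly $i_p$ and is admissible by the above, while at that moment the monotonicity of the draw together with the strict chain $\nu^{\ca S}_{i_p}<\nu^{\ca S}_{i_{p+1}}\le\cdots$ keeps the draw at the level of the first higher edge strictly below its budget $\nu^{\ca S}$---so that edge is admissible as well, and the induction proceeds up to $i_k=\infty$.

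The reverse climb from $f$ in $G^R$ is identical, and the connection through $G^{[\infty]}$ finishes the construction as described above. The main obstacle I expect lies precisely in the climbing step, and specifically in the fact that edge-admissibility in Definition~\ref{def:stadmissible} is tested against the \emph{optimal} (minimum-weight) admissible walk to the tail, not against the walk I have just built: a cheaper admissible walk to the same vertex could a priori carry a larger $\ca S$-draw and spoil the budget bound. I would address this by tracking, as $\ca S$-Dijkstra of Theorem~\ref{thm:SDijkstra} does, the minimum level-$\ell$ draw over all optimal admissible walks to each vertex, and arguing that the optimal-substructure of minimum-weight admissible walks preserves the level-wise budget inequalities along the detour. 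A secondary difficulty is that routing-connectedness of $G^{[i_p]}$ may force the connecting walk through transient edges of scope above $i_p$ before reaching the target higher-scope edge; I would handle this by always taking the \emph{first} higher-scope edge encountered (so the prefix stays at scope exactly $i_p$) and, where such a detour is unavoidable, by invoking properness of the higher subgraph $G^{[i_{p+1}]}$, so that the unconditional admissibility at the top level $\infty$ ultimately absorbs any residual draw.
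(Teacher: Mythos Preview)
The paper does not actually prove Theorem~\ref{thm:connscope}; it is quoted from \cite{HM2011A} and only invoked in the proof of Proposition~\ref{pro:enhanced-detour-exists} (with the parenthetical ``analogically to the proof of Theorem~\ref{thm:connscope}'').  So there is no in-paper proof to compare against.  Your overall strategy---climb from $e$ through the scope levels using properness of each $G^{[i_p]}$ until a scope-$\infty$ edge is reached, do the same in reverse from $f$, and splice the two halves through the routing-connected $G^{[\infty]}$---is exactly the natural argument and is almost certainly what \cite{HM2011A} does.

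You have, however, correctly located the one genuine gap and not closed it.  In Definition~\ref{def:stadmissible}, $s$-admissibility of an edge $(v,w)$ is a property of the edge relative to the \emph{optimal} $s$-admissible walk to $v$ (in $G-(v,w)$), not relative to whatever walk your climbing procedure happened to build.  Your inductive invariant ``the level-$i_p$ draw stays below $\nu^{\ca S}_{i_p}$ while we traverse scope-$i_p$ edges'' is a statement about your constructed walk; it does not by itself say anything about the optimal admissible walk to the current vertex, which could be cheaper yet carry a larger level-$i_p$ draw, killing admissibility of the next scope-$i_p$ edge.  Your proposed remedy---appeal to optimal substructure and to the minimum-draw tracking that $\ca S$-Dijkstra performs---is the right instinct, but as written it is only a pointer, not an argument.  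To make the climb rigorous you should carry the induction not on ``my walk has small draw'' but on ``the minimum level-$i_p$ draw among \emph{optimal} $s$-admissible walks to the current vertex is at most $\nu^{\ca S}_{i_p}$''; the Relax rule in Algorithm~\ref{alg:sdijkstra} (which updates $\sigma_i[v]$ only when $d[u]+w(u,v)\le d[v]$ and takes a minimum) shows precisely how this quantity propagates, and the existential quantifier over optimal $Q$ in Definition~\ref{def:stadmissible}(a) is what lets you use the minimum.  Once that invariant is stated, the step along a scope-$i_p$ edge preserves it (extending an optimal walk by a scope-$i_p$ edge does not raise level-$i_p$ draw, and any strictly cheaper admissible walk to the new vertex must itself end in an $s$-admissible edge whose tail already satisfied the invariant), and the jump to the first edge of scope $>i_p$ goes through by your monotonicity observation.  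Filling in that invariant is the only real work left.
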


\begin{proof}[of \bfseries Proposition~\ref{pro:enhanced-detour-exists}]
Let $C^*$ denote the qc-closure of $C$ (cf.~Def.~\ref{def:enhanced-detour}).
By definition, an $s$-$t$ walk is $C$-avoiding iff it is $C^*$-avoiding.
By Theorem~\ref{thm:connscope}, there is an $\ca S$-admissible $s$-$t$ walk
$P\subseteq G$, and we take an optimal such~$P$.

Let $f=(u,v)\in E(P)$ be the first edge of $P$ such that $f\in C^*$
(if such one does not exist, then we are done with $P$).
Observe that there is another edge $f'=(u,v')\not\in C^*$
from which $t$ can be reached on a $C^*$-avoiding walk:
If $u=s$, then this follows from the assumption of existence of a
$C$-avoiding $s$-$t$ walk.
Otherwise, let $f_1=(u_1,u)$ be the edge preceding $f$ on $P$.
Since $f_1\not\in C^*$ by our choice of $f$, there must be a
$C^*$-avoiding $u_1$-$t$ walk starting with $f_1$, 
and $f'$ can be chosen as the second edge on it.
Symmetric claim holds, of course, in reverse from $t$.

Now, the vertex $u$ is $C^*$-obstructed for initial $draw^{\ca S}(P^{su})$
and target $t$, with obstruction level~$0$ (cf.~Def.~\ref{def:c-obstructed}).
Hence $f'$ will be in accordance with point (iii.) of
Def.~\ref{def:simple-c-detour}.
We take a new $s$-$t$ walk $P'$ with prefix $P^{su}.f'$\,---this walk
continues from $f'$ to $t$ with an optimal $\ca S$-admissible $u$-$t$ walk
that additionally never decreases the scope level at the start
(such a walk exists thanks to proper scope mapping, analogically to the
proof of Theorem~\ref{thm:connscope}).
If, again, $P'$ intersects $C^*$, then we repeat the above argument.
In a finite number of steps, we reach the conclusion.
Obviously, the constructed walk may be far away from optimality, but that is
not the objective of this claim.
\qed\end{proof}

\section{Full Detour Admissibility: the Definition}

\begin{definition}[Full $C$-detour $\ca S$-admissibility]
  \label{def:full-c-detour}
  Let $(G,w)$ be a road network, $\ca S$ a scope mapping on it,
  and $C\subseteq E(G)$ a set of road closures. An $s$-$t$ walk 
  $P=(s=u_0,e_1,\dots e_k,u_k=t)\subseteq G$ is {\em fully $C$-detour 
  $\ca S$-admissible} if $E(P)\cap C=\emptyset$
  and the following are true:
  \begin{enumerate}[i.]
    \parskip 2pt
  \item
  There exist indices $a_0=0<a_1<\dots<a_p<k$, and $c_0=k>c_1>\dots>c_q>0$ in
  reverse; here to avoid nested indexing, we shortly write $d_i=u_{a_i}$
  and $d'_i=u_{c_i}$.
  Moreover, there is a set\footnote{The meaning of $B$ is technical:
	This set presents the ``breakpoints'' on $P$ at which we switch from
	considering $\ca S$-admissibility straight to considering it in
	reverse (i.e., from valueing $\ca S$-draw from the last obstructed vertex
	to borrowing it till the next obstructed vertex in reverse).
	Switch back happens automatically after finishing the detour.}
  $B\subseteq V(P)$ such that, between every two succeeding $d_i$ and $d_j'$
  on $P$, there is one selected vertex between them in~$B$.
  \item
  $\vec\pi[d_0]=\vec0$ is the zero $\ca S$-vector.
  \item
  For each $i>0$, this $d_i$ is $C$-obstructed for the target $t$,
  and $\vec\pi[d_i]$ shortly denotes the corresponding $C$-obstruction state
  of~$d_i$.
  Specially, if $P^{\,d_{i-1}\!d_i}$ (the short subwalk from $d_{i-1}$ to $d_i$)
  avoids $B$, then $d_i$ is $C$-obstructed for the initial $\ca S$-vector
  $\vec\omega[d_i]$ and $t$ (otherwise, $\vec\omega[d_i]=\vec\infty$):~
  $\vec\omega[d_i]=\vec\pi[d_{i-1}]+ draw^{\ca S}(Q)$ where $Q$ is an
  optimal $(d_{i-1},\vec\pi[d_{i-1}])$-admissible $d_{i-1}$-\,$d_i$ walk.
  \item
  (ii.) and (iii.) are analogously formulated for $d_i'$ in reverse.
  \item
  For each $e_{m+1}=(u_{m},u_{m+1})\in E(P)$, (at~least) one of the following holds:
    \begin{itemize}
    \item $\ca S(e_{m+1})=\infty$.
    \item $\ca S(e_{m+1})=\ell<\infty$, and there exists $0\leq j\leq m$ such that
    $j=a_i$, the subwalk $P^{d_iu_{m}}$ from $d_i$ to (including) $u_m$ is disjoint from
    $B\cup\{d_1',\dots,d_q'\}$, and the following is fulfilled:
    It is \mbox{$\vec\pi_\ell[d_i]+\big[draw^{\ca S}(Q)\big]_\ell\leq
	 \nu^{\cal{S}}_{\ell}$} for some optimal
    $(d_{i},\vec\pi[d_{i}])$-admissible $d_{i}$-$u_m$ walk~$Q$.
    \item The previous holds in reverse for some $m+1\leq j\leq k$ such that
    $j=c_i$.
    \end{itemize}
  \end{enumerate}
\end{definition}

\end{document}